\newcommand{\bra}[1]{\langle {#1} |}
\newcommand{\ket}[1]{| {#1} \rangle}
\newcommand{\R}[1]{\vec{R}_{#1}}
\newtheorem{theorem}{Theorem}
\newtheorem{lemma}{Lemma}
\begin{document}
\title[Two-party LOCC convertibility and Kraus-Cirac number]{Two-party LOCC convertibility of quadpartite states and Kraus-Cirac number of two-qubit unitaries}
\author{Akihito Soeda$^1$, Seiseki Akibue$^1$ and Mio Murao$^{1,2}$}

\address{$^1$ Department of Physics, Graduate School of Science,
the University of Tokyo, 7-3-1, Hongo, Bunkyo-ku, Tokyo, Japan}
\address{$^2$ Institute for Nano Quantum Information Electronics,
the University of Tokyo, 4-6-1, Komaba, Meguro-ku, Tokyo, Japan}

\ead{soeda@phys.s.u-tokyo.ac.jp}
\ead{akibue@eve.phys.s.u-tokyo.ac.jp}
\ead{murao@phys.s.u-tokyo.ac.jp}

\begin{abstract}
Nonlocal properties (globalness) of a non-separable unitary determine how the unitary affects the entanglement properties of a quantum state.
We apply a given two-qubit unitary on a quadpartite system including two reference systems and analyze its \textit{LOCC partial invertibility} under two-party LOCC.
A decomposition given by Kraus and Cirac for two-qubit unitaries shows that the globalness is completely characterized by three parameters.
Our analysis shows that the number of non-zero parameters (the Kraus-Cirac number) has an operational significance when converting entanglement properties of multipartite states.
All two-qubit unitaries have the Kraus-Cirac number at most 3, while those with at most 1 or 2 are equivalent, up to local unitaries, to a controlled-unitary or matchgate, respectively.
The presented operational framework distinguishes the untaries with the Kraus-Cirac number 2 and 3, which was not possible by the known measure of the operator Schmidt decomposition.
We also analyze how the Kraus-Cirac number changes when two or more two-qubit unitaries are applied sequentially.
\end{abstract}
\maketitle

\section{Introduction}
In terms of entanglement, quantum information processing (QIP) is a sequence of operations that convert entanglement, in which each operation is introduced to convert a quantum state with a particular entanglement property to another state with possibly a different property.
Appropriate conversion of entanglement is necessary to achieve the desired QIP.
The feasibility of an entanglement conversion depends on the initial state, the set of available operations, and the number of times each operation is allowed to be used.

Entanglement convertibility has been investigated under various conditions.
Operations can be restricted to local operations and classical communication (LOCC) or allow \textit{global} operations, which are operations not implementable by LOCC.
Under LOCC, entanglement is viewed as a resource and led to the resource theory of entanglement (for reviews, see e.g., Refs.~\cite{VP,H}).
With global operations, entanglement conversion that increases or creates entanglement is also possible.
The rich structure of entanglement implies that the amount and the kind of entanglement that can be created depends on the given global operation.
There is a number of literature investigating the amount of maximum entanglement by a single use of a given unitary~\cite{NDDG+}, where the maximum entanglement increase (with respect to a certain entanglement measure) is obtained as a function of the parameters that describe the used unitary.

Unitaries on two-qubit systems are one of the most elementary types of global operations, parametrized by 15 degrees of freedom.
The number of relevant parameters is reduced to 3 for most typical entanglement measures~\cite{Makhlin,KC}, because the measures are defined so to be invariant under local unitaries on the individual subsystems.
Any two-qubit unitary can be expressed as
\begin{equation}
 U = u_\mathrm{A} \otimes u_\mathrm{B} \cdot \exp[\rmi (\alpha_\mathrm{x} XX + \alpha_\mathrm{y} YY + \alpha_\mathrm{z} ZZ)] \cdot v_\mathrm{A} \otimes v_\mathrm{B}, \label{KCdecomp}
\end{equation}
where $u_\mathrm{A}$, $u_\mathrm{B}$, $v_\mathrm{A}$, and $v_\mathrm{B}$ are single-qubit unitaries and $XX$, $YY$, and $ZZ$ are short-hand notations of $X \otimes X$, $Y \otimes Y$, and $Z \otimes Z$, respectively~\cite{KC,Zhang}.
As seen here, a two-qubit unitary is realizable by combining three two-qubit unitaries, i.e.\ $u_\mathrm{A} \otimes u_\mathrm{B}$, $\exp[\rmi (\alpha_\mathrm{x} XX + \alpha_\mathrm{y} YY + \alpha_\mathrm{z} ZZ)]$, and $v_\mathrm{A} \otimes v_\mathrm{B}$.
The only part not implementable within LOCC---the \textit{global} part---is determined by the three parameters, $\alpha_\mathrm{x}$, $\alpha_\mathrm{y}$, and $\alpha_\mathrm{z}$.
The generators of the global part are not unique, e.g.\ any three independent linear combinations of $XX$, $YY$, and $ZZ$ would suffice~\cite{KG}. 
If the generators are the tensor products of two local operators as in \Eref{KCdecomp}, we call it \textit{Kraus-Cirac (KC) decomposition}.
The global parameters of the KC decomposition, which we call {\it KC coefficients}, determine global properties of a two-qubit unitary in entanglement conversion.

The KC decomposition is unique if we restrict the global parameters to the Weyl chamber~\cite{Zhang}.
In this paper, we call the number of nonzero KC coefficients as \textit{KC number}.
Its operational significance in entanglement conversion was not known to the best of our knowledge.
A main contribution of this paper is to present an entanglement conversion task, whose feasibility depends on the KC number of the entangling global unitary.
In particular, we analyze entanglement convertibility of quadpartite states under a single use of a two-qubit unitary and two-party LOCC.

Let us consider two qubits, A and B, each of which may be entangled to another quantum system (or its \textit{reference system}), $\mathrm{R_A}$ and $\mathrm{R_B}$, respectively.
We assume that the reference systems are inaccessible, i.e.\ no operations are allowed on these systems.
Let the initial state be a product state with respect to partition A$\mathrm{R_A}$-B$\mathrm{R_B}$. 
We take the dimension of the reference systems arbitrary and assume that the entire system is in a pure state.
A global unitary on system A and B then generates entanglement across partition A$\mathrm{R_A}$-B$\mathrm{R_B}$.
Only LOCC on A and B are allowed after this entangling operation.
Here, the local operations include all the generalized measurements on a single-qubit, whose physical implementations are discussed in Ref.~\cite{Ota}.

The unitary converts an A$\mathrm{R_A}$-B$\mathrm{R_B}$ product state to an A$\mathrm{R_A}$-B$\mathrm{R_B}$ entangled one. 
Since all unitaries are invertible, the opposite conversion is also possible.
This requires a global operation on A and B, because the inverse of a global unitary is also global.
Therefore, if the operations on A and B are strictly limited to LOCC, the A$\mathrm{R_A}$-B$\mathrm{R_B}$ entangled state cannot convert to the initial A$\mathrm{R_A}$-B$\mathrm{R_B}$ product state.
Although LOCC allows measurements, which are probabilistic operations, we require that the inversion is deterministic since the unitary inversion is also deterministic.

On the other hand, such LOCC may still achieve a less demanding conversion.
A trivial example is to require only that the A$\mathrm{R_A}$-B$\mathrm{R_B}$ entanglement is broken. 
A much less trivial conversion is when we require breaking the A$\mathrm{R_A}$-B$\mathrm{R_B}$ entanglement by only accessing system A and B with LOCC, while in addition requiring that one or the other of the two original subsystems A$\mathrm{R_A}$ or B$\mathrm{R_B}$ be restored to its original state (Figure \ref{fig:loccpartialinversion}).
In this case, the globalness of the unitary does not automatically imply non-invertibility under LOCC.

\begin{figure}[htbp]
 \begin{center}
  \includegraphics[width=140mm]{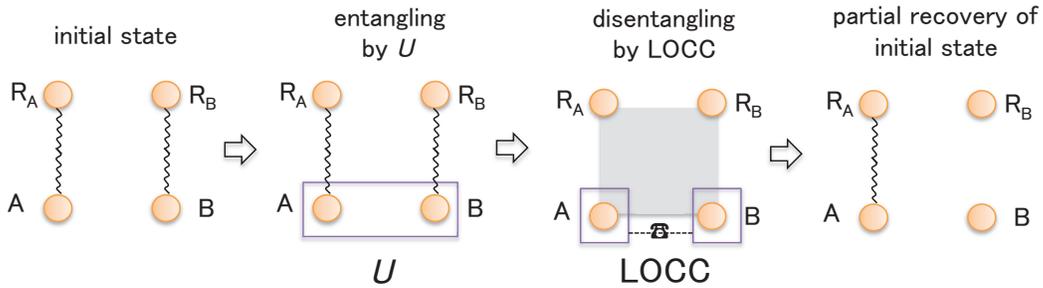}
 \end{center}
 \caption{A schematic picture of LOCC partial inversion.  The quadpartite initial state consists of two qubits A, B and their reference systems $\mathrm{R_A}$ and $\mathrm{R_B}$.  The initial state is a product state with respect to partition A$\mathrm{R_A}$-B$\mathrm{R_B}$.  The initial state is entangled by a unitary $U$ on A and B and then disentangled by two-party LOCC on A and B.   LOCC partial inversion is a task to partially recover the initial state of system A$\mathrm{R_A}$ or B$\mathrm{R_B}$ after the entangling and disentangling processes of entanglement conversion. 
 
  }
 \label{fig:loccpartialinversion}
\end{figure}

Clearly, our partial invertibility by LOCC is determined by the choice of unitary \textit{and} input state.
The difficulty of LOCC partial inversion is that we must break the entanglement across A$\mathrm{R_A}$-B$\mathrm{R_B}$ but preserve of that across A-$\mathrm{R_A}$ or B-$\mathrm{R_B}$.
Therefore, it is reasonable to expect that LOCC partial inversion is easier if A or B is initially disentangled with its respective reference system (c.f.\ A is already assumed to be disentangled with B$\mathrm{R_B}$, and B with A$\mathrm{R_A}$).
In this paper, we study which two-qubit unitary is LOCC partially invertible when (i) both A and B are initially entangled to their respective reference systems, (ii) only A or B is, and finally (iii) when neither is, and prove that each case corresponds to two-qubit unitaries with the KC number at most 1, 2, and 3, respectively.
The analysis also establishes the KC number as a figure of globalness of the unitaries.

Although two-qubit unitaries and LOCC considered are bipartite operations, they are applied on \textit{multipartite} systems.
The necessary and sufficient condition is known for entanglement convertibility of bipartite pure states under LOCC in both deterministic~\cite{majorization} and probabilistic~\cite{Vidal} conversion.
No such condition is known for multipartite states (except for 3-qubit pure states in deterministic LOCC conversion~\cite{Tajima}).
Reference~\cite{deVincente} suggests that no solution exists at least in a simple form.

Note that all two-qubit unitaries are LOCC partially invertible in the third case, because it implies that the initial state of system AB is separable and can be prepared by LOCC.
Therefore, for case (iii), the KC number can be at most 3, which is the largest possible value.
The other two cases are much less trivial.
The rest of the paper is organized as follows.
Section~\ref{LOCCinv} proves our statement regarding case (ii) and proceeds to case (i).
The KC number as a figure of the globalness of two-qubit unitaries is discussed in Section~\ref{as figure}.
We then analyze how the KC number changes when two or more two-qubit unitaries are combined in Section~\ref{moreKCs}.

\section{LOCC partial invertibility and KC number}\label{LOCCinv}
\subsection{KC number $\leq$ 2}
Let us first treat case (ii) where only system A$\mathrm{R_A}$ or B$\mathrm{R_B}$ is initially entangled, which we take the former to be entangled without loss of generality.
We omit system $\mathrm{R_B}$ in the following analysis, because the initial state of $\mathrm{R_B}$ clearly does not affect LOCC partial invertibility when B and $\mathrm{R_B}$ are disentangled.
We denote the initial state of A$\mathrm{R_A}$ by $\ket{\Phi_{\mathrm{AR_A}}}$ or $\Phi_{\mathrm{AR_A}} = \ket{\Phi_{\mathrm{AR_A}}}\bra{\Phi_{\mathrm{AR_A}}}$ and that of B  by $\ket{\varphi_{\mathrm{B}}}$ or $\varphi_{\mathrm{B}} = \ket{\varphi_{\mathrm{B}}}\bra{\varphi_{\mathrm{B}}}$.

The goal of LOCC partial inversion in this case is to find, for the entangling unitary $U$, a CPTP map $\Lambda_{\mathrm{LOCC}}$ that is implementable by LOCC and restores system A$\mathrm{R_A}$ to $\ket{\Phi_{\mathrm{AR_A}}}$, i.e.
\begin{equation}\label{PI2}
 \Tr_{\mathrm{B}}[\Lambda_{\mathrm{LOCC}}(U (\Phi_{\mathrm{AR_A}} \otimes \varphi_{\mathrm{B}}) U^\dag)] = \Phi_{\mathrm{AR_A}}.
\end{equation}
The state $\ket{\Phi_{\mathrm{AR_A}}}$ can be expressed as
\begin{equation}
 \ket{\Phi_{\mathrm{AR_A}}} = \ket{0}_\mathrm{A} \ket{v_0}_\mathrm{R_A} + \ket{1}_\mathrm{A} \ket{v_1}_\mathrm{R_A},
\end{equation}
where $\ket{0}_\mathrm{A}, \ket{1}_\mathrm{A} \in \mathcal{H}_\mathrm{A}$ are orthonormal and $\ket{v_0}_\mathrm{R_A}, \ket{v_1}_\mathrm{R_A} \in \mathcal{H}_\mathrm{R_A}$ are orthogonal but not necessarily normalized.
We define a linear operator $J: \mathcal{H}_\mathrm{R_A} \rightarrow \mathcal{H}_\mathrm{R'_A}$, where $\mathcal{H}_\mathrm{R'_A}$ is a two-dimensional system, such that
\begin{equation}
 J \ket{v_k}_\mathrm{R_A} = \ket{k}_\mathrm{R'_A} \quad (k = 0,1).
\end{equation}
If we apply $J$ from left and $J^\dag$ from right to the both sides of \Eref{PI2}, we obtain
\begin{equation}\label{PI2'}
  \Tr_{\mathrm{B}}[\Lambda_{\mathrm{LOCC}}(U (\tilde{\Phi}_{\mathrm{AR'_A}} \otimes \varphi_{\mathrm{B}}) U^\dag)] = \tilde{\Phi}_{\mathrm{AR'_A}},
\end{equation}
where $\tilde{\Phi}_{\mathrm{AR'_A}}$ is a maximally entangled (but unnormalized) state between A and $\mathrm{R'_A}$, namely
\begin{equation}
 \ket{\tilde{\Phi}_{\mathrm{AR'_A}}} = \ket{0}_\mathrm{A}\ket{0}_\mathrm{R'_A} + \ket{1}_\mathrm{A}\ket{1}_\mathrm{R'_A}.
\end{equation}
Therefore, we see that if LOCC partial inversion is possible for a given unitary $U$ for some possibly non-maximally entangled state $\ket{\Phi_{AR_A}}$, then the same unitary must be partially invertible by LOCC for a maximally entangled input.

\Eref{PI2'} implies that operations on A is limited to random unitary.
The operation $\Lambda_\mathrm{LOCC}$ and the partial trace over system B cannot increase entanglement between systems A and $\mathrm{R'_A}$.
This implies that the state $U (\tilde{\Phi}_{\mathrm{AR'_A}} \otimes \varphi_{\mathrm{B}}) U^\dag$ is maximally entangled across partition $\mathrm{AB}$-$\mathrm{R'_A}$.
Since A and $\mathrm{R'_A}$ are both two-dimensional systems, the state is also maximally entangled across partition A-$\mathrm{BR'_B}$.
Therefore, the operation on A can only be random unitary operations.
This proves the following lemma.
\begin{lemma} \label{lem:random_unitary}
If the partial inversion map $\Lambda_{\mathrm{LOCC}}$ can be implemented by LOCC for the given unitary $U$, maximally entangled input $\tilde{\Phi}_{\mathrm{AR'_A}}$, and one-qubit state $\varphi_\mathrm{B}$ as in \Eref{PI2'}, then operations on A are restricted to random unitary.
\end{lemma}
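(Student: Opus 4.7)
The plan is to exploit two ingredients: entanglement monotonicity across the bipartition $\mathrm{AB}$--$\mathrm{R'_A}$, and the extremality of pure states within the convex set of density matrices. Since $\Lambda_{\mathrm{LOCC}}$ acts only on the AB side, the cut that sees monotonicity most directly is $\mathrm{AB}$--$\mathrm{R'_A}$; combined with the fact that the required output on $\mathrm{AR'_A}$ is the maximally entangled pure state $\tilde{\Phi}_{\mathrm{AR'_A}}/2$, this will tightly constrain every step of the protocol.

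First I would track the entanglement across $\mathrm{AB}$--$\mathrm{R'_A}$ through the three stages. Immediately after $U$, the state $(U\otimes I_{\mathrm{R'_A}})(\ket{\tilde{\Phi}_{\mathrm{AR'_A}}}/\sqrt{2}\otimes\ket{\varphi_{\mathrm{B}}})$ is pure and of Schmidt rank $2$ across $\mathrm{AB}$--$\mathrm{R'_A}$, because $U$ acts only on AB and the input has rank $2$ between A and $\mathrm{R'_A}$; since $\mathrm{R'_A}$ is two-dimensional this is maximal. Neither $\Lambda_{\mathrm{LOCC}}$ nor the subsequent $\mathrm{Tr}_{\mathrm{B}}$ can then increase the A--$\mathrm{R'_A}$ entanglement. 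Since \Eref{PI2'} demands that the final A--$\mathrm{R'_A}$ entanglement saturate $\log 2$, every inequality in the chain must be an equality. In particular, the state on $\mathrm{ABR'_A}$ just after $\Lambda_{\mathrm{LOCC}}$ has $\mathrm{AR'_A}$-marginal equal to the pure state $\tilde{\Phi}_{\mathrm{AR'_A}}/2$, and a density matrix with a pure marginal must factorize as a product: the full state equals $(\tilde{\Phi}_{\mathrm{AR'_A}}/2)\otimes\sigma_{\mathrm{B}}$ for some $\sigma_{\mathrm{B}}$.

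Next I would pass to a Kraus representation of $\Lambda_{\mathrm{LOCC}}$ as a separable operation, writing it as $\sum_i(A_i\otimes B_i)(\cdot)(A_i\otimes B_i)^\dagger$ with the completeness relation $\sum_i A_i^\dagger A_i\otimes B_i^\dagger B_i=I_{\mathrm{AB}}$. Applied to the pure input $\ket{\Psi}\bra{\Psi}$, this produces an ensemble of subnormalized pure states on $\mathrm{ABR'_A}$ averaging to the factorized state above. Extremality of pure states forces the $\mathrm{AR'_A}$ marginal of \emph{every} branch to equal $\tilde{\Phi}_{\mathrm{AR'_A}}/2$; combining this with purity of each branch on the full system, each branch state itself factorizes as $(\ket{\tilde{\Phi}_{\mathrm{AR'_A}}}/\sqrt{2})\otimes\ket{\theta_i}_{\mathrm{B}}$. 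Writing the Schmidt form $\ket{\Psi}=\frac{1}{\sqrt{2}}\sum_k\ket{\phi_k}_{\mathrm{AB}}\ket{k}_{\mathrm{R'_A}}$ with $\ket{\phi_k}=U\ket{k}_{\mathrm{A}}\ket{\varphi_{\mathrm{B}}}$ and matching the $\mathrm{R'_A}$ components yields $(A_i\otimes B_i)\ket{\phi_k}_{\mathrm{AB}}\propto\ket{k}_{\mathrm{A}}\otimes\ket{\theta_i}_{\mathrm{B}}$ for $k=0,1$. A rank-one $A_i$ can then be ruled out---its image is a single ray and cannot simultaneously land on both $\ket{0}_{\mathrm{A}}$ and $\ket{1}_{\mathrm{A}}$---so each $A_i$ is full rank. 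A short computation of the reduced operators on A, combined with the completeness relation, then forces $A_i^\dagger A_i\propto I_{\mathrm{A}}$, so that each $A_i$ is a scalar multiple of a unitary. Averaging over $i$ expresses the effective channel on A as a convex combination of unitary conjugations, i.e., a random unitary.

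The step I expect to be the main obstacle is the very last one: converting the branchwise product-pure identity into the rigidity $A_i^\dagger A_i\propto I_{\mathrm{A}}$. The delicate point is that the two conditions $(A_i\otimes B_i)\ket{\phi_k}\propto\ket{k}\otimes\ket{\theta_i}$ for $k=0,1$ couple $A_i$ and $B_i$ non-trivially, and the completeness relation has to be invoked carefully to exclude rank-deficient $A_i$ in a uniform manner. Everything upstream (monotonicity plus pure-marginal factorization) should go through quite cleanly; the real calibration work is in this concluding step.
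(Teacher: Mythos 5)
Your first two stages are sound and run parallel to the paper's own reasoning: monotonicity across $\mathrm{AB}$--$\mathrm{R'_A}$, purity of the required output marginal, extremality forcing every branch of a product-form Kraus decomposition to output $\ket{\tilde{\Phi}_{\mathrm{AR'_A}}}\otimes\ket{\theta_i}_\mathrm{B}$, and the branchwise rigidity $(A_i\otimes B_i)\ket{\phi_k}\propto\ket{k}_\mathrm{A}\ket{\theta_i}_\mathrm{B}$ are all correct. The gap is exactly where you place it, but it is larger than a ``calibration'': the branch condition constrains only the composite $(A_i\otimes B_i)U(\mathbb{I}_\mathrm{A}\otimes\ket{\varphi_\mathrm{B}})$, not $A_i$ itself. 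Once $A_i$ is invertible one gets $(\mathbb{I}\otimes B_i)\ket{\phi_k}=\ket{\xi_k^{(i)}}_\mathrm{A}\otimes\ket{\theta_i}_\mathrm{B}$ and the condition becomes $A_i\ket{\xi_k^{(i)}}=c_i\ket{k}_\mathrm{A}$, i.e.\ $A_i\,\sigma_i\,A_i^\dagger\propto\mathbb{I}_\mathrm{A}$ with $\sigma_i=\sum_k\ket{\xi_k^{(i)}}\bra{\xi_k^{(i)}}$ a branch-dependent positive operator. This says nothing about $A_i^\dagger A_i$ unless one can show $\sigma_i\propto\mathbb{I}$, i.e.\ that the two vectors $\ket{\xi_k^{(i)}}$ are orthogonal with equal norms; that requires a genuinely global argument coupling all branches through the completeness relation (in concrete examples it can be pushed through with a Jensen-type inequality, but it is not a ``short computation'' and you have not supplied it). Note also that ruling out rank-one $A_i$ gets you almost nowhere: for a $2\times2$ matrix, ``full rank'' is far weaker than ``proportional to a unitary''.

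The paper closes this step by a different, round-by-round argument that avoids the difficulty entirely: the entanglement across the cut A versus $\mathrm{BR'_A}$ cannot increase on average under any local step of the protocol, cannot exceed $\log 2$ because $\dim\mathcal{H}_\mathrm{A}=2$, and must equal $\log 2$ in every branch at the end (the final $\mathrm{AR'_A}$ marginal is the pure maximally entangled state), so it is pinned at $\log 2$ at every node of the protocol tree. Hence the reduced state on A is $\mathbb{I}/2$ immediately before and after each measurement on A, and each measurement operator $M$ on A satisfies $M(\mathbb{I}/2)M^\dagger\propto\mathbb{I}/2$, i.e.\ $MM^\dagger\propto\mathbb{I}$, so $M$ is proportional to a unitary; the accumulated operator on A is then automatically of the form $c\,u$. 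This uses the sequential structure of LOCC that your accumulated-Kraus formulation discards, and it is precisely that structure that makes the conclusion cheap. To repair your proposal, either import this martingale argument to show $\sigma_i\propto\mathbb{I}$, or carry out the missing global completeness-relation argument explicitly.
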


In general, $\Lambda_\mathrm{LOCC}$ is a two-way LOCC protocol, but Lemma \ref{lem:random_unitary} implies that if such a protocol exists, then one-way LOCC suffices, i.e.\
\begin{lemma} \label{one-wayLOCC}
If the partial inversion map $\Lambda_{\mathrm{LOCC}}$ can be implemented by LOCC for the given unitary $U$, maximally entangled input $\tilde{\Phi}_{\mathrm{AR'_A}}$, and one-qubit state $\varphi_\mathrm{B}$ as in \Eref{PI2'}, then $\Lambda_{\mathrm{LOCC}}$ can be implemented by one-way LOCC.
\end{lemma}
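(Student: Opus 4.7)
The plan is to use Lemma~\ref{lem:random_unitary} to eliminate any need for Alice-to-Bob communication, thereby collapsing any two-way protocol implementing $\Lambda_{\mathrm{LOCC}}$ into a one-way one from B to A. Concretely, I would decompose the given LOCC protocol into its Kraus branches, each labeled by the full history $\vec{r}$ of classical outcomes exchanged during the protocol, and argue branch by branch that Alice's actions can be absorbed into a single unitary applied at the end.

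First, I would read Lemma~\ref{lem:random_unitary} at the branch level: in every Kraus branch the net operator Alice applies to A is proportional to a unitary $V_{\vec{r}}$. A measurement whose Kraus operator has the form $\sqrt{p}\,V$ with $V$ unitary produces outcome probability $p$ independent of the input state, so every apparent measurement Alice carries out during the protocol is in fact state-independent classical sampling. Hence the classical messages Alice sends to Bob carry no information about the quantum input; they are mere random bits, and can equivalently be generated by Bob using private randomness and, if needed, forwarded to Alice by a single message in the B-to-A direction.

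Next, I would use the fact that Alice's operators act only on A while Bob's act only on B, so the two sequences of operators commute inside each branch. This lets all of Alice's branch-specific unitaries be deferred to the very end of the protocol without altering the joint output. The restructured procedure consists of Bob performing his full measurement sequence, with his choices conditioned both on his own real outcomes and on the simulated ``Alice'' outcomes he has sampled, then transmitting the complete classical record to Alice, who finally applies $V_{\vec{r}}$ on A. This is manifestly one-way LOCC from B to A, and by construction it yields the same CPTP map as the original two-way protocol, so it also satisfies \Eref{PI2'}.

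The main technical hurdle is to justify the per-branch reading of Lemma~\ref{lem:random_unitary}: the statement could be parsed as a claim only about the marginal channel on A, whereas the argument above needs each Kraus branch of the LOCC protocol to have its A-side operator proportional to a unitary. I would secure this by noting that within each LOCC branch the post-operation state on $\mathrm{ABR'_A}$ is pure and, after tracing out B, must reproduce the pure maximally entangled state $\tilde{\Phi}_{\mathrm{AR'_A}}$ up to a probability factor, which forces the A-side Kraus factor in that branch to be a unitary. Once this per-branch unitarity is in hand, the collapse to one-way LOCC is routine bookkeeping over the tree of LOCC rounds.
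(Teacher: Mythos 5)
Your proposal is correct and follows essentially the same route as the paper: invoke Lemma~\ref{lem:random_unitary} to conclude that the accumulated A-side operator in every branch is proportional to a unitary, observe that the completeness relation then transfers entirely to the B-side operators (so Bob's rescaled accumulated operators form a valid generalized measurement on their own), and restructure the protocol as a single measurement by Bob followed by a classically conditioned unitary on A. The paper simply forms the coarse-grained POVM $\{c^{\vec{R}_n} B^{\vec{R}_n}\}$ directly rather than phrasing Alice's messages as Bob-simulatable random bits, but the substance is identical.
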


\begin{proof}
We provide a construction of a one-way LOCC protocol that implements $\Lambda_\mathrm{LOCC}$.
All quantum operations are given by a generalized measurement operation, which is specified by a set of operators $\{ M^{(r)} \}_r$ satisfying the completeness relation $\sum_r M^{(r)\dag}M^{(r)} = \mathbb{I}$.
Here, the outcomes of the measurement operation is denoted by the superscript $r$.
Any LOCC protocol on two quantum systems can be reduced to one where operations on A and B are applied in turns.
The outcome of each turn is communicated via classical communication to the other party at the end of each turn.
In general, the choice of measurement operations at each turn may depend on all the measurement outcomes of the previous turns. 

Without loss of generality, we assume that the partial inversion protocol always terminates after $n$ turns.
After $n$ turns, there is a sequence of $n$ measurement outcomes, which we denote by $\R{n}$. 
The accumulated effect of all the measurement operations is given by the product of all the operators corresponding to each measurement outcome.
Let $A^{\R{n}}$ and $B^{\R{n}}$ denote the accumulated operator for measurement sequence $\R{n}$ on system A and B, respectively.
Moreover, operator $A^{\R{n}} \otimes B^{\R{n}}$ as a whole forms a generalized measurement, i.e.\
\begin{equation} \label{ABcompleteness}
 \sum_{\R{n}} (A^{\R{n}} \otimes B^{\R{n}})^\dag A^{\R{n}} \otimes B^{\R{n}} = \mathbb{I}.
\end{equation}

By Lemma \ref{lem:random_unitary}, $A^{\R{n}}$ is proportional to a unitary, hence
\begin{equation} \label{B}
 A^{\R{n}} = c^{\R{n}} u^{\R{n}}
\end{equation}
for some positive number $c^{\R{n}}$ and unitary $u^{\R{n}}$.
Substituting \Eref{B} to \Eref{ABcompleteness} we obtain
\begin{equation}
 \sum_{\R{n}} (c^{\R{n}} B^{\R{n}})^\dag (c^{\R{n}} B^{\R{n}}) = \mathbb{I}.
\end{equation}
This implies that $\{ c^{\R{n}} B^{\R{n}} \}_{\R{n}}$ forms a generalized measurement.
We apply this measurement on B and follow by $u^{\R{n}}$ on A, which implements the desired CPTP map $\Lambda_\mathrm{LOCC}$.
\end{proof}

We are now prepared to prove our main statement:
\begin{theorem} \label{KC2}
The partial inversion map $\Lambda_{\mathrm{LOCC}}$ can be implemented by LOCC for the given unitary $U$ and maximally entangled input $\tilde{\Phi}_{\mathrm{AR'_A}}$ with some one-qubit state $\varphi_\mathrm{B}$ as in \Eref{PI2'} if and only if the KC number of the entangling unitary is at most 2.
\end{theorem}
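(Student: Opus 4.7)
My plan is to prove both directions of the equivalence separately. In each direction, I first absorb the local-unitary factors of the KC decomposition, reducing without loss of generality to $U = \exp[\rmi(\alpha_\mathrm{x} XX + \alpha_\mathrm{y} YY + \alpha_\mathrm{z} ZZ)]$: the pre-unitaries $v_\mathrm{A}, v_\mathrm{B}$ are absorbed into $\ket{\tilde\Phi_{\mathrm{AR'_A}}}$ and $\ket{\varphi_\mathrm{B}}$, while the post-unitaries $u_\mathrm{A}, u_\mathrm{B}$ are absorbed into the LOCC protocol itself (into the post-outcome unitary on A and the measurement basis on B).

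For the sufficient direction ($\mathrm{KC}\le 2\Rightarrow$ invertibility), I handle the subcases separately. When $\mathrm{KC}\le 1$, $U$ reduces (up to local unitaries) to a controlled unitary; taking B as the control, a computational-basis measurement on B followed by an outcome-conditioned unitary on A restores $\ket{\tilde\Phi_{\mathrm{AR'_A}}}$ for any $\ket{\varphi_\mathrm{B}}$. When $\mathrm{KC}=2$, $U$ is a matchgate that preserves the parity subspaces of AB. Choosing $\ket{\varphi_\mathrm{B}}=\ket{+}$ and computing $F_b:=\bra{b}_\mathrm{B}U\ket{\varphi}_\mathrm{B}$ via the even/odd decomposition, the combinations $F_0\pm F_1$ (which arise from projecting B onto $\ket{\pm}$) simplify by the sum-to-product identities to $\sqrt{2}\cos\alpha_\mathrm{y}\cdot e^{\rmi\alpha_\mathrm{x}X}$ and $\sqrt{2}\sin\alpha_\mathrm{y}\cdot (\sin\alpha_\mathrm{x}Z-\cos\alpha_\mathrm{x}Y)$, both proportional to single-qubit unitaries. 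Applying the appropriate inverse on A then completes the protocol.

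For the necessary direction, Lemma~\ref{one-wayLOCC} reduces the problem to one-way LOCC: a measurement $\{B^{(r)}\}$ on B followed by an outcome-dependent unitary $u^{(r)}$ on A. Reading \Eref{PI2'} outcome by outcome, and noting that the factored B-state after each outcome must be pure, I deduce that each Kraus operator is rank one, $B^{(r)}=\ket{\chi^{(r)}}\bra{\alpha^{(r)}}$, with $\sum_r\ket{\alpha^{(r)}}\bra{\alpha^{(r)}}=I_\mathrm{B}$. The condition then reduces to $F(\alpha^{(r)}):=\bra{\alpha^{(r)}}_\mathrm{B}U\ket{\varphi}_\mathrm{B}$ being proportional to a single-qubit unitary on A for every $r$. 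I then prove the contrapositive: assuming $\mathrm{KC}=3$, no nonzero $\ket{\alpha}$ can satisfy this for any $\ket{\varphi_\mathrm{B}}$. Writing $\ket{\varphi_\mathrm{B}}=a\ket{0}+b\ket{1}$ and expanding $F(\alpha)^\dagger F(\alpha)$, the diagonal-equality condition forces $|\alpha_0|=|\alpha_1|$ (since $\cos^2\beta\ne\cos^2\gamma$ for $\alpha_\mathrm{x}\alpha_\mathrm{y}\ne 0$, with $\beta=\alpha_\mathrm{x}-\alpha_\mathrm{y}$ and $\gamma=\alpha_\mathrm{x}+\alpha_\mathrm{y}$), and the off-diagonal vanishing factors as $\mu(a,b,\alpha_\mathrm{z})\cdot(\eta\sin\gamma\cos\beta-\bar\eta\sin\beta\cos\gamma)=0$, where $\eta=\alpha_0^*\alpha_1$ and $\mu=|a|^2 e^{-2\rmi\alpha_\mathrm{z}}-|b|^2 e^{2\rmi\alpha_\mathrm{z}}$. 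For $\alpha_\mathrm{z}\ne 0$ one has $\mu\ne 0$ for all $(a,b)$, and the remaining factor vanishes only if $\tan\beta/\tan\gamma=\pm 1$, forcing $\alpha_\mathrm{x}\alpha_\mathrm{y}=0$ and contradicting $\mathrm{KC}=3$.

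The main obstacle is the bookkeeping in this last step: the three sources of phase/modulus freedom---the phase of $\alpha_0^*\alpha_1$, the weights of $\ket{\varphi_\mathrm{B}}$, and the $\alpha_\mathrm{z}$-dependent factor $e^{\pm 2\rmi\alpha_\mathrm{z}}$---must be organized so that they cannot conspire to meet all the unitarity constraints simultaneously when $\mathrm{KC}=3$. A convenient reformulation is to view $F(\alpha)^\dagger F(\alpha)\propto I$ as three Hermitian quadratic constraints on the Bloch vector of $\ket{\alpha}$ (one per Pauli axis on A), whose common zero locus on the Bloch sphere of B must be rich enough to support a POVM; showing that this locus is adequate only when $\alpha_\mathrm{x}\alpha_\mathrm{y}\alpha_\mathrm{z}=0$ is the crux of the theorem.
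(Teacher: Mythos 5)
Your proposal is correct in substance but follows a genuinely different route from the paper. The paper's proof never examines individual measurement outcomes: after the reduction to one-way LOCC via Lemma~\ref{one-wayLOCC}, it recognizes \Eref{PICPTP} as the environment-assisted error-correction problem of Gregoratti and Werner~\cite{GW}, so that a correcting LOCC map exists if and only if the induced channel $\Gamma$ of \Eref{Gamma} is a random unitary map, which by the qubit quantum Birkhoff theorem~\cite{qBirkhoff} is equivalent to the single matrix condition $\Gamma(\mathbb{I})=\mathbb{I}$; the theorem then reduces to an explicit computation of $\Gamma(\mathbb{I})$ in terms of $\alpha_\mathrm{x},\alpha_\mathrm{y},\alpha_\mathrm{z}$ and $\varphi_\mathrm{B}$. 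You instead analyze the steered operators $F(\alpha)=\bra{\alpha}_\mathrm{B}U\ket{\varphi}_\mathrm{B}$ outcome by outcome and prove the pointwise statement that for KC number 3 \emph{no} vector $\ket{\alpha}$ yields $F(\alpha)$ proportional to a unitary for any $\varphi_\mathrm{B}$. I checked this claim independently (writing $U=\sum_j u_j\,\sigma_j\otimes\sigma_j$, the existence of such a pair $(\ket{\alpha},\ket{\varphi_\mathrm{B}})$ is equivalent to $0$ lying in the convex hull of the four phases $e^{-2\rmi\arg u_j}$, which fails precisely when all of $\sin\alpha_\mathrm{x},\sin\alpha_\mathrm{y},\sin\alpha_\mathrm{z}$ are nonzero), so your contrapositive is sound and in fact slightly stronger than what the paper's route yields, since Ref.~\cite{GW} only excludes a \emph{complete} random-unitary decomposition rather than a single unitary Kraus element. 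The price is the bookkeeping you acknowledge, which the paper avoids entirely by outsourcing the hard implication to \cite{GW} and \cite{qBirkhoff}. Two details you should tighten: (i) the rank-one reduction of the $B$-measurement operators does not follow merely from purity of the post-measurement $B$ state---a rank-two $B^{(r)}$ is conceivable a priori---but it is harmless because a rank-two operator would only impose the unitarity condition on \emph{more} vectors $\ket{\gamma}$ in its row space, so your non-existence argument still applies; (ii) in the sufficiency step for KC number 2 you should state that, within the Weyl chamber, the two nonvanishing coefficients can be taken to be $\alpha_\mathrm{x}$ and $\alpha_\mathrm{y}$ (or reduce to that case by a local Pauli-axis relabeling), since your explicit $\ket{\pm}$-measurement protocol uses $\alpha_\mathrm{z}=0$.
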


\begin{proof}
First we prove the forward implication.
Multiply $\rho_\mathrm{R'_A}$ to the both sides of \Eref{PI2'} and take the partial trace on system $\mathrm{R'_A}$ to yield
\begin{equation} \label{PICPTP}
  \Tr_{\mathrm{B}}[\Lambda_{\mathrm{LOCC}}(U (^t\rho_\mathrm{A} \otimes \varphi_{\mathrm{B}}) U^\dag)] = {}^t\rho_\mathrm{A}.
\end{equation}
By Lemma \ref{one-wayLOCC}, we only need to consider $\Lambda_\mathrm{LOCC}$ implemented by one-way LOCC.
The unitary $U$ in \Eref{PICPTP} can be viewed as ``noise" between system A and B, while the one-way LOCC $\Lambda_\mathrm{LOCC}$ is a correction map for this noise.
This problem is precisely the one studied by Gregoratti and Werner in Ref.~\cite{GW} from which we know that such a LOCC protocol exists if and only if a CPTP map $\Gamma$ on A defined by
\begin{equation} \label{Gamma}
 \Gamma(\rho_\mathrm{A}) = \Tr_{\mathrm{B}}[U (\rho_\mathrm{A} \otimes \varphi_{\mathrm{B}}) U^\dag]
\end{equation}
is a random unitary map.
According to the quantum Birkhoff theorem~\cite{qBirkhoff}, a CPTP map on a qubit is a random unitary if and only if it is unital, i.e.\
\begin{equation}
 \Gamma(\mathbb{I}) = \mathbb{I}.
\end{equation}
This argument shows that \Eref{PICPTP} is satisfied if and only if CPTP map $\Gamma(\rho_\mathrm{A})$ for a given $U$ and $\varphi_\mathrm{B}$ is unital.

Next we analyze the unitality of $\Gamma$ in terms of the KC number.
It is easy to see that LOCC partial invertibility depends only on the global part of the KC decomposition, hence without loss of generality
we assume that in \Eref{KCdecomp}
\begin{equation}
 u_\mathrm{A} = u_\mathrm{B} = v_\mathrm{A} = v_\mathrm{B} = \mathbb{I}.
\end{equation}
We parametrize $\ket{\varphi_\mathrm{B}}$ by
\begin{equation}
 \ket{\varphi_\mathrm{B}} = a \ket{0_\mathrm{B}} + b \ket{1_\mathrm{B}},
\end{equation}
where $a$ and $b$ are normalized by $|a|^2 + |b|^2 = 1$.
Direct calculation will reveal that
\begin{equation}
 \Gamma(\mathbb{I}) = \left[ \begin{array}{cc} g_{11} & g_{12} \\ g_{21} & g_{22} \end{array} \right],
\end{equation}
where
\begin{eqnarray}
 g_{11} &= 1 + (|b|^2 - |a|^2) \sin(2 \alpha_\mathrm{x}) \sin(2 \alpha_\mathrm{y})\\
 g_{12} &= g_{21}^* = -(ab^* (\sin(2 \alpha_\mathrm{y}) - \sin(2 \alpha_\mathrm{x})) \nonumber \\
 & \quad + ba^* (\sin(2 \alpha_\mathrm{x}) + \sin(2 \alpha_\mathrm{y}))) \sin(2 \alpha_\mathrm{z})\\
 g_{22} &= 1 - (|b|^2 - |a|^2) \sin(2 \alpha_\mathrm{x}) \sin(2 \alpha_\mathrm{y}).
\end{eqnarray}

Hence, $U$ is correctable by one-way LOCC if and only if
\begin{eqnarray}
 g_{11} &= g_{22} = 1\\
 g_{12} &= g_{21}^* = 0.
\end{eqnarray}
Solving these equations, we can easily verify that the maximum number of nonzero $\alpha_k$ $(k=\mathrm{x,y,z})$ is 2.  
Therefore, LOCC partial inversion under the stated conditions is possible only if the entangling unitary has the KC number at most 2.

 The converse is almost trivial.
Notice that if the KC number is at most 2, then there exists a one-qubit state $\varphi_\mathrm{B}$ such that $\Gamma$ is a random unitary map (e.g.\ choose $\ket{\varphi_\mathrm{B}} = (\ket{0}_\mathrm{B} + \ket{1}_\mathrm{B})/\sqrt{2})$, hence a LOCC correction map exists by Gregoratti and Werner's result, and \Eref{PICPTP} is satisfied.
Finally, \Eref{PICPTP} implies \Eref{PI2'}.
\end{proof}
Thus we see that partial invertibility by LOCC under case (ii) distinguishes two-qubit unitaries with KC number 3 from other unitaries with a lower KC number.

\subsection{KC number $\leq$ 1}
Let us analyze case (i), where not only A$\mathrm{R_A}$ but B$\mathrm{R_B}$ is also initially entangled.
We denote the initial state of B$\mathrm{R_B}$ by $\ket{\Phi'_\mathrm{BR_B}}$.
LOCC partial inversion in this case is to find a LOCC-implementable map $\Lambda_\mathrm{LOCC}$ such that
\begin{equation} \label{case1}
 \Tr_\mathrm{BR_B} \left[ \Lambda_\mathrm{LOCC}(U (\Phi_\mathrm{AR_A} \otimes \Phi'_\mathrm{BR_B}) U^\dag) \right] = \Phi_\mathrm{AR_A}.
\end{equation}
Since $\Phi_\mathrm{AR_A}$ is a pure state, such $\Lambda_\mathrm{LOCC}$ exists if and only if there exists $\rho_\mathrm{BR_B}$ such that
\begin{equation} \label{case1'}
 \Lambda_\mathrm{LOCC}(U (\Phi_\mathrm{AR_A} \otimes \Phi'_\mathrm{BR_B}) U^\dag) = \Phi_\mathrm{AR_A} \otimes \rho_\mathrm{BR_B}.
\end{equation}
The state $\ket{\Phi'_{\mathrm{BR_B}}}$ can be expressed as
\begin{equation}
 \ket{\Phi'_{\mathrm{BR_B}}} = \ket{0}_\mathrm{B} \ket{w_0}_\mathrm{R_B} + \ket{1}_\mathrm{B} \ket{w_1}_\mathrm{R_B},
\end{equation}
where $\ket{w_0}_\mathrm{R_B}, \ket{w_1}_\mathrm{R_B} \in \mathcal{H}_\mathrm{R_B}$ are orthogonal but not necessarily normalized.
We define a linear operator $K: \mathcal{H}_\mathrm{R_B} \rightarrow \mathcal{H}_\mathrm{R'_B}$, where $\mathcal{H}_\mathrm{R'_B}$ is a two-dimensional system, such that
\begin{equation}
 K \ket{w_k}_\mathrm{R_B} = \ket{k}_\mathrm{R'_B} \quad (k = 0,1).
\end{equation}
We apply $J$, $J^\dag$, $K$, and $K^\dag$ to \Eref{case1'} to obtain
\begin{equation}\label{case1''}
 \Lambda_\mathrm{LOCC}(U (\tilde{\Phi}_\mathrm{AR'_A} \otimes \tilde{\Phi}_\mathrm{BR'_B}) U^\dag) = \tilde{\Phi}_\mathrm{AR'_A} \otimes \rho'_\mathrm{BR'_B},
\end{equation}
where $\rho'_\mathrm{BR'_B} = K \rho_\mathrm{BR_B} K^\dag$.
Therefore, we see that if LOCC partial inversion is possible for a given unitary $U$ for some possibly non-maximally entangled states $\ket{\Phi_{AR_A}}$ and $\ket{\Phi'_{BR_B}}$, then the same unitary must be partially invertible by LOCC for two maximally entangled inputs.
Hence, our main statement for case (i) is given as follows:
\begin{theorem} \label{KC1}
The partial inversion map $\Lambda_{\mathrm{LOCC}}$ can be implemented by LOCC for the given unitary $U$ and maximally entangled input $\tilde{\Phi}_{\mathrm{AR'_A}}$ and $\tilde{\Phi}_\mathrm{BR'_B}$ as in \Eref{case1''} with some $\rho'_\mathrm{BR'_B}$  if and only if the KC number of the entangling unitary is at most 1.
\end{theorem}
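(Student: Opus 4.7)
The plan is to prove both implications, with the forward direction reducing to Theorem~\ref{KC2} applied to an entire family of pure one-qubit states $\varphi_{\mathrm{B}}$, and the converse built from an explicit one-way LOCC protocol. The key technical observation is that the case~(i) condition \Eref{case1''}, because both inputs are maximally entangled, simultaneously implies the case~(ii) condition \Eref{PI2'} for \emph{every} pure qubit state as B-input; collecting all those constraints sharpens the $\mathrm{KC}\le 2$ bound of Theorem~\ref{KC2} to $\mathrm{KC}\le 1$.

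For the forward direction I would proceed in three steps. First, tracing both sides of \Eref{case1''} over AB exploits $\Lambda_{\mathrm{LOCC}}$ being CPTP on AB and $U$ being unitary on AB: the left-hand side collapses to $\mathbb{I}_{\mathrm{R'_A}}\otimes\mathbb{I}_{\mathrm{R'_B}}$ and the right-hand side to $\mathbb{I}_{\mathrm{R'_A}}\otimes\sigma_{\mathrm{R'_B}}$ with $\sigma_{\mathrm{R'_B}}:=\Tr_{\mathrm{B}}[\rho'_{\mathrm{BR'_B}}]$, so $\sigma_{\mathrm{R'_B}}=\mathbb{I}_{\mathrm{R'_B}}$. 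Second, for any normalized $\ket{\phi}_{\mathrm{R'_B}}$ I sandwich \Eref{case1''} between $\bra{\phi}$ and $\ket{\phi}$ on $\mathrm{R'_B}$: since neither $\Lambda_{\mathrm{LOCC}}$ nor $U$ touches $\mathrm{R'_B}$ and $\bra{\phi}\tilde\Phi_{\mathrm{BR'_B}}\ket{\phi}=\ket{\bar\phi}\bra{\bar\phi}_{\mathrm{B}}$, then taking the partial trace over B and using $\bra{\phi}\sigma_{\mathrm{R'_B}}\ket{\phi}=1$ produces exactly \Eref{PI2'} with the same $\Lambda_{\mathrm{LOCC}}$ and input $\varphi_{\mathrm{B}}=\ket{\bar\phi}\bra{\bar\phi}$. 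By the forward implication of Theorem~\ref{KC2}, the channel $\Gamma$ of \Eref{Gamma} must therefore be unital for \emph{every} pure qubit $\varphi_{\mathrm{B}}$. Third, writing $\ket{\varphi_{\mathrm{B}}}=a\ket{0}+b\ket{1}$ and re-using the expressions for $g_{11},g_{12}$ from Theorem~\ref{KC2}'s proof, $g_{11}=1$ for every admissible $(a,b)$ forces $\sin(2\alpha_{\mathrm{x}})\sin(2\alpha_{\mathrm{y}})=0$, and $g_{12}=0$ for every $(a,b)$ then forces a further $\sin(2\alpha_k)$ to vanish, leaving at most one nonzero KC coefficient.

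For the converse, by the local-unitary invariance of LOCC partial invertibility I may take $U=\exp[\rmi\alpha ZZ]$. The protocol is the one-way controlled-phase undoing: B measures in the $Z$ basis, obtains an outcome $k\in\{0,1\}$, sends $k$ to A classically, and A applies $\exp[-\rmi\alpha(-1)^k Z]$. A direct computation on $U\ket{\tilde\Phi}_{\mathrm{AR'_A}}\ket{\tilde\Phi}_{\mathrm{BR'_B}}$ verifies that, conditional on outcome $k$, the $\mathrm{AR'_A}$ marginal after correction is $\tilde\Phi_{\mathrm{AR'_A}}$ while B and $\mathrm{R'_B}$ end up in $\ket{kk}\bra{kk}_{\mathrm{BR'_B}}$; averaging over $k$ yields \Eref{case1''} with $\rho'_{\mathrm{BR'_B}}=\frac{1}{2}(\ket{00}\bra{00}+\ket{11}\bra{11})$. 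The $\mathrm{KC}=0$ case requires no entangling operation at all.

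The main obstacle is the forward direction. The critical move is not the $g_{ij}$ algebra, which closely mirrors Theorem~\ref{KC2}, but establishing the universality: a \emph{single} $\Lambda_{\mathrm{LOCC}}$ has to realize case~(ii) inversion for \emph{every} pure $\varphi_{\mathrm{B}}$ simultaneously. The full-rank identity $\sigma_{\mathrm{R'_B}}=\mathbb{I}_{\mathrm{R'_B}}$ is precisely what guarantees that no sandwich by $\ket{\phi}_{\mathrm{R'_B}}$ degenerates, and it is this universal quantifier over $\varphi_{\mathrm{B}}$ that distinguishes $\mathrm{KC}=1$ from $\mathrm{KC}=2$, since a generic $\mathrm{KC}=2$ unitary admits unital $\Gamma$ only for isolated choices of $\varphi_{\mathrm{B}}$.
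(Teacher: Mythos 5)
Your proof is correct, but it follows a genuinely different route from the paper's. The paper proves Theorem~\ref{KC1} by showing that condition \Eref{case1''} is equivalent to the ``relocation'' condition \Eref{LOCCreloc} quantified over all pure product inputs (multiplying by ${}^t\psi_{\mathrm{R'_A}}\otimes{}^t\psi_{\mathrm{R'_B}}$ and tracing --- essentially the same sandwich trick you use, but applied to both reference systems), and then invokes Theorem~\ref{NJP} from Ref.~\cite{SM}, which characterizes \Eref{LOCCreloc} as holding exactly for controlled-unitaries up to local unitaries; the heavy LOCC analysis is thus outsourced to that external result. You instead sandwich only $\mathrm{R'_B}$, landing in the case~(ii) setting of Theorem~\ref{KC2} for \emph{every} pure $\varphi_{\mathrm{B}}$ simultaneously (the full-rank observation $\sigma_{\mathrm{R'_B}}=\mathbb{I}$ correctly handles the normalization), and then extract $\sin(2\alpha_{\mathrm{x}})\sin(2\alpha_{\mathrm{y}})=\sin(2\alpha_{\mathrm{y}})\sin(2\alpha_{\mathrm{z}})=\sin(2\alpha_{\mathrm{z}})\sin(2\alpha_{\mathrm{x}})=0$ from the universally quantified unitality conditions $g_{11}=1$ and $g_{12}=0$ (choosing, e.g., $(a,b)=(1,0)$, $(1,1)/\sqrt{2}$ and $(1,\rmi)/\sqrt{2}$ does force all pairwise products to vanish, hence at most one nonzero coefficient, with the same implicit Weyl-chamber caveat about $\sin(2\alpha_k)=0$ versus $\alpha_k=0$ that the paper itself makes in Theorem~\ref{KC2}). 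Your explicit measure-and-correct protocol for the converse is also sound (up to a normalization convention for $\rho'_{\mathrm{BR'_B}}$, since $\tilde{\Phi}$ is unnormalized). What each approach buys: the paper's proof is short given Theorem~\ref{NJP} and makes the conceptual link to controlled-unitaries explicit, whereas yours is self-contained within this paper, reusing only the Gregoratti--Werner/quantum-Birkhoff machinery already set up for Theorem~\ref{KC2}, at the cost of relying on the explicit $g_{ij}$ algebra.
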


The most difficult part of the proof is that we need to treat \textit{all} the cases in which such an inversion map is found.
We are not allowed to impose any restrictions on the LOCC protocols other than that it satisfies the conditions stated in Theorem \ref{KC1}.
We shall utilize a theorem we proved in Ref.~\cite{SM}.
\begin{theorem}[adapted from Ref.~\cite{SM}] \label{NJP}
For a given two-qubit unitary $U$, there exists an LOCC-implementable map $\Lambda_\mathrm{LOCC}$ such that, for arbitrary input state $\ket{\psi_\mathrm{A}}$ and $\ket{\psi_\mathrm{B}}$,
\begin{equation} \label{LOCCreloc}
 \Lambda_\mathrm{LOCC} \left[ U (\psi_\mathrm{A} \otimes \psi_\mathrm{B})  U^\dagger \right] = \psi_\mathrm{A} \otimes \rho_\mathrm{B},
\end{equation}
where $\rho_\mathrm{B} \in \mathcal{S}(\mathcal{H}_\mathrm{B})$ is a density matrix independent of $\ket{\psi_\mathrm{A}}$, if and only if $U$ is equivalent to a controlled-unitary up to local unitaries.
\end{theorem}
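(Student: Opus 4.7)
The plan is to prove both directions, largely by adapting techniques from the proof of Theorem~\ref{KC2}.

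For the (if) direction, assume $U = (u_\mathrm{A} \otimes u_\mathrm{B})\, V\, (v_\mathrm{A} \otimes v_\mathrm{B})$ with $V$ a controlled-unitary from A, say $V = \ket{0}\bra{0}_\mathrm{A}\otimes \mathbb{I}_\mathrm{B} + \ket{1}\bra{1}_\mathrm{A}\otimes W_\mathrm{B}$. Diagonalize $W = \sum_j e^{\rmi\theta_j}\ket{f_j}\bra{f_j}$ and consider the one-way LOCC that first absorbs $u_\mathrm{A}$ and $u_\mathrm{B}$ via $u_\mathrm{A}^\dagger \otimes u_\mathrm{B}^\dagger$, then has B measure in the $\{\ket{f_j}\}$ basis and communicate the outcome $j$ to A, after which A applies the phase unitary $\ket{0}\bra{0} + e^{-\rmi\theta_j}\ket{1}\bra{1}$ followed by $v_\mathrm{A}^\dagger$. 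Expanding $v_\mathrm{B}\ket{\psi_\mathrm{B}}$ in $\{\ket{f_j}\}$ shows that the reduced state on A equals $\ket{\psi_\mathrm{A}}\bra{\psi_\mathrm{A}}$ exactly, while the reduced state on B is the $\{\ket{f_j}\}$-dephased version of $v_\mathrm{B}\ket{\psi_\mathrm{B}}$; the latter is independent of $\ket{\psi_\mathrm{A}}$, so \Eref{LOCCreloc} holds. The symmetric case ($V$ controlled from B) is handled by interchanging the roles of A and B.

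For the (only if) direction, assume $\Lambda_\mathrm{LOCC}$ realizes \Eref{LOCCreloc} for all $\ket{\psi_\mathrm{A}}$ and $\ket{\psi_\mathrm{B}}$. Since both sides are linear in $\ket{\psi_\mathrm{A}}\bra{\psi_\mathrm{A}}$, the identity extends to arbitrary operators on $\mathcal{H}_\mathrm{A}$; equivalently, by Choi--Jamio{\l}kowski one may replace $\psi_\mathrm{A}$ by the unnormalized maximally entangled state $\tilde\Phi_\mathrm{AR'_A}$ to obtain
\[
(\Lambda_\mathrm{LOCC}\otimes \mathbb{I}_\mathrm{R'_A})\bigl[(U\otimes \mathbb{I})(\tilde\Phi_\mathrm{AR'_A}\otimes \psi_\mathrm{B})(U^\dagger\otimes \mathbb{I})\bigr] = \tilde\Phi_\mathrm{AR'_A}\otimes \rho'_\mathrm{B}(\psi_\mathrm{B})
\]
for every $\ket{\psi_\mathrm{B}}$, with a single fixed $\Lambda_\mathrm{LOCC}$. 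This is precisely the configuration of \Eref{PI2'}, so Lemmas~\ref{lem:random_unitary} and~\ref{one-wayLOCC} restrict $\Lambda_\mathrm{LOCC}$ to one-way LOCC whose action on A is a random-unitary ensemble. By the Gregoratti--Werner criterion together with the quantum Birkhoff theorem, the existence of such a $\Lambda_\mathrm{LOCC}$ forces the noise channel $\Gamma_{\psi_\mathrm{B}}(\rho_\mathrm{A}) = \Tr_\mathrm{B}[U(\rho_\mathrm{A}\otimes \psi_\mathrm{B})U^\dagger]$ to be unital for \emph{every} $\ket{\psi_\mathrm{B}}$, not merely for a single $\varphi_\mathrm{B}$ as in Theorem~\ref{KC2}. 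Writing $\ket{\psi_\mathrm{B}} = a\ket{0} + b\ket{1}$ and reusing the matrix elements of $\Gamma(\mathbb{I})$ already computed in the proof of Theorem~\ref{KC2}, the conditions $g_{11}=g_{22}=1$ and $g_{12}=0$ must hold for every $a,b$. Killing the $|a|^2-|b|^2$ coefficient in $g_{11}$ together with the $ab^*$ and $ba^*$ coefficients in $g_{12}$ separately yields
\[
\sin(2\alpha_\mathrm{x})\sin(2\alpha_\mathrm{y}) = \sin(2\alpha_\mathrm{y})\sin(2\alpha_\mathrm{z}) = \sin(2\alpha_\mathrm{z})\sin(2\alpha_\mathrm{x}) = 0,
\]
so at most one of the three sines is nonzero. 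Within the Weyl chamber this forces at most one $\alpha_k$ to be nonzero, i.e.\ the KC number of $U$ is at most~1; by the classification quoted in the introduction, $U$ is then locally equivalent to a controlled-unitary.

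The main obstacle is rigorously justifying that a \emph{single} $\Lambda_\mathrm{LOCC}$ be treated as simultaneously correcting every noise channel $\Gamma_{\psi_\mathrm{B}}$. Demanding simultaneous correction is a priori stronger than demanding each $\Gamma_{\psi_\mathrm{B}}$ be correctable in isolation; but for the only-if direction only the latter necessary condition is required to pin the KC number down, while the former is secured on the if-side by the explicit $\psi_\mathrm{B}$-independent protocol, closing the equivalence.
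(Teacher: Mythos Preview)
Your proposal is correct, and the \emph{if} direction matches the natural construction. For the \emph{only if} direction, however, you take a genuinely different route from the paper's sketch. The paper's outline (deferred in detail to Ref.~\cite{SM}) proceeds in four steps: (i) operations on A must be random unitary, (ii) hence one-way LOCC suffices, (iii) the measurement on B may be taken projective, and (iv) a projective measurement on B followed by a conditional unitary on A forces $U$ to be locally equivalent to a controlled-unitary. Your argument shares steps (i)--(ii) but replaces (iii)--(iv) entirely: you invoke Gregoratti--Werner together with the qubit Birkhoff theorem to conclude that the induced channel $\Gamma_{\psi_\mathrm{B}}$ is unital for \emph{every} $\ket{\psi_\mathrm{B}}$, and then feed this family of unitality constraints back into the explicit matrix entries $g_{ij}$ already computed in the proof of Theorem~\ref{KC2}, obtaining $\sin(2\alpha_\mathrm{x})\sin(2\alpha_\mathrm{y})=\sin(2\alpha_\mathrm{y})\sin(2\alpha_\mathrm{z})=\sin(2\alpha_\mathrm{z})\sin(2\alpha_\mathrm{x})=0$ and hence KC number $\leq 1$.

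What this buys: your approach is more self-contained within the present paper, recycling the $g_{ij}$ calculation and the Gregoratti--Werner/Birkhoff machinery from Theorem~\ref{KC2} rather than importing the separate projective-measurement reduction from Ref.~\cite{SM}; it also makes transparent exactly how the stronger hypothesis (validity for all $\ket{\psi_\mathrm{B}}$ rather than some $\ket{\varphi_\mathrm{B}}$) tightens the bound from $\#\textsc{KC}\leq 2$ to $\#\textsc{KC}\leq 1$. The paper's structural route, by contrast, pinpoints the operational form of the optimal protocol (projective on B) and does not depend on the explicit parametric computation, which may generalize more readily beyond the two-qubit case. Two small points worth tightening in your write-up: the ``symmetric case'' remark in the \emph{if} direction is unnecessary, since every two-qubit controlled-unitary is locally equivalent to a controlled-phase and hence to one controlled from A; and in the last step you should note that on the canonical Weyl chamber the relevant $\alpha_k$ lie in $[0,\pi/4]$, so $\sin(2\alpha_k)=0$ genuinely forces $\alpha_k=0$.
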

An overview of the proof is: (i) first we prove that operations on A is restricted to random unitary; (ii) thus, if $\Lambda_\mathrm{LOCC}$ exists, then one-way LOCC suffices; (iii) we show that we only need to consider projective measurements for operations on B; and finally, (iv) if $\Lambda_\mathrm{LOCC}$ is implemented by a projective measurement followed by a unitary, then $U$ must be equivalent to a controlled-unitary up to local unitaries.
For more detailed proof, we refer the reader to Ref.~\cite{SM}.

\begin{proof}[Proof of Theorem \ref{KC1}]
 It is easy to see that \Eref{LOCCreloc} implies \Eref{case1''}.
For the converse, multiply ${}^t\psi_\mathrm{R'_A} \otimes {}^t\psi_\mathrm{R'_B}$ to both sides of \Eref{case1''} and take partial trace over $\mathrm{R'_AR'_B}$.
Finally, notice that
\begin{equation}
 \exp[\rmi \alpha_\mathrm{z} ZZ] = (\mathbb{I} \otimes \exp[\rmi \alpha_\mathrm{z} Z]) (\ket{0}_\mathrm{A}\bra{0} \otimes \mathbb{I} + \ket{1}_\mathrm{A}\bra{1} \otimes \exp[-\rmi 2\alpha_\mathrm{z} Z]),
\end{equation}
and that the spectral decomposition of a single-qubit unitary $u$ is given by
\begin{equation}
 u = v \exp[-\rmi \theta Z] v^\dag,
\end{equation}
for some real number $\theta$ and single-qubit unitary $v$, which show that the set of all two-qubit controlled-unitaries and their local unitary equivalents is equal the set of all two-qubit unitaries with the KC number 1.
Therefore, we proved that Theorems \ref{KC1} and \ref{NJP} are equivalent statements.
\end{proof}
Theorem \ref{KC1} shows that LOCC partial invertibility under LOCC in case (i) distinguishes unitaries with the KC number less than or equal to 1 from other unitaries.

\section{KC number as figure of globalness} \label{as figure}
The results in the previous section show that the KC number is actually a figure of globalness of the unitaries.
The entangled states which are LOCC partially invertible are, in a sense, less entangled than those not.
Two-qubit unitaries with higher KC number generate ``more" entangled states and thus have a stronger global effect on entanglement.

The degree of globalness in terms of the KC number seems to be independent of the degree in terms of other known measures of globalness.
It is not so surprising that the KC number behaves distinctly~\cite{SM} from \textit{continuous} measures of globalness such as entangling power~\cite{ZanardiZF} or entangling capacity~\cite{LHL}, but the discreteness of the KC number is not the deciding factor.
The operator Schmidt number is an operator equivalent of the Schmidt number and takes value 1, 2, and 4 for two-qubit unitaries~\cite{Nielsen}.
All global unitaries have operator Schmidt number greater than 1.
The operator Schmidt number of a given unitary is related to the Schmidt number of entangled states that can be created by the unitary.
Indeed, after a two-qubit unitary $U$ is applied in our quadpartite system, its operator Schmidt number gives the maximum Schmidt number of the resulting state with respect to partition A$\mathrm{R_A}$-B$\mathrm{R_B}$.
Unitaries with operator Schmidt number equal to 1 and 2 are equivalent to ones with KC number 0 and 1, respectively, but unitaries with higher KC number are indistinguishable by the operator Schmidt number.

Interestingly, the KC number successfully identifies a family of two-qubit unitaries known as matchgates, which produces a nontrivial class of classically simulable quantum computation~\cite{matchgate}.
All matchgates have the KC number at most 2 and, conversely, all such two-qubit unitaries are equal to a matchgate up to local unitaries.
Lastly, two-qubit unitaries with the KC number at most 1 are equivalent to a controlled-unitary up to local unitaries and vice versa.

\section{Conversion of KC number} \label{moreKCs}
A two-qubit unitary is expected to change the KC number of another two-qubit unitary when the former is applied on the latter.
We considered above when only a single two-qubit unitary is applied.
If we apply more unitaries, then the LOCC partial invertibility of the state will change.
The two lemmas below describe how much the LOCC partial invertibility can be changed.

Let us denote the set of two-qubit unitaries by $\mathcal{U}(\mathcal{H}_\mathrm{AB})$ and the set of two-qubit controlled-unitaries and their local unitary equivalents by $\mathcal{U}_c(\mathcal{H}_\mathrm{AB})$.
Note that every operation of $\mathcal{U}_\mathrm{c}(\mathcal{H})$ is local unitarily equivalent to the controlled-phase operation $C_\mathrm{p}(\theta) :=(\ket{00}\bra{00}+\ket{01}\bra{01}+\ket{10}\bra{10}+e^{i\theta}\ket{11}\bra{11})_\mathrm{AB}$.
We denote the KC number of a unitary $U$ by $\#\textsc{KC}(U)$.
 We summarize the class of $SU(4)$ in Table~\ref{table:KCnumber}.
\begin{table}[h]
\begin{center}
\begin{tabular}{|c|c|c|}
\hline
$\#\textsc{KC}(U)$ & operator Schmidt number & class\\
\hline
0 & 1 & local unitary\\
\hline
1 & 2 & controlled-unitary\\
\hline
2 & 4 & matchgate \\
\hline
3 & 4 & $SU(4)$\\
\hline
\end{tabular}
\end{center}
\caption{Classification of $SU(4)$ by the KC number and the operator Schmidt number.}
\label{table:KCnumber}
\end{table}

\begin{lemma} \label{KCcombined}
\begin{eqnarray}
\{U\in SU(4)|\#\textsc{KC}(U)\leq1\}=\{U|U\in\mathcal{U}_c(\mathbb{C}^4)\}\label{eq:controlledKC1}\\
\{U\in SU(4)|\#\textsc{KC}(U)\leq2\}=\{UV|U,V\in\mathcal{U}_c(\mathbb{C}^4)\}\label{eq:controlledKC2}\\
\{U\in SU(4)|\#\textsc{KC}(U)\leq3\}=\{UVW|U,V,W\in\mathcal{U}_c(\mathbb{C}^4)\}\label{eq:controlledKC3}
\end{eqnarray}
\end{lemma}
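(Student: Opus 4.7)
The plan is to prove each equality by its two inclusions, using Theorem~\ref{KC1} for (\ref{eq:controlledKC1}) and, for the other two, the commutativity of $XX$, $YY$, $ZZ$ together with the fact that each $\exp[\rmi\alpha\,\sigma\otimes\sigma]$ with $\sigma\in\{X,Y,Z\}$ is locally equivalent to a controlled-phase. Equation (\ref{eq:controlledKC1}) is immediate from the last step in the proof of Theorem~\ref{KC1}, which already identifies controlled-unitaries (up to local unitaries) with the set of two-qubit unitaries of KC number at most $1$.

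For the ``$\subseteq$'' inclusions of (\ref{eq:controlledKC2}) and (\ref{eq:controlledKC3}), I would exploit $[XX,YY]=[YY,ZZ]=[XX,ZZ]=0$ to factorize the global part of the KC decomposition as
\begin{equation}
\exp[\rmi(\alpha_\mathrm{x}XX+\alpha_\mathrm{y}YY+\alpha_\mathrm{z}ZZ)]=\exp[\rmi\alpha_\mathrm{x}XX]\,\exp[\rmi\alpha_\mathrm{y}YY]\,\exp[\rmi\alpha_\mathrm{z}ZZ].
\end{equation}
Each nontrivial factor lies in $\mathcal{U}_c(\mathbb{C}^4)$: $\exp[\rmi\alpha ZZ]$ is a controlled-phase up to local unitaries, while $\exp[\rmi\alpha XX]$ and $\exp[\rmi\alpha YY]$ are obtained from it by conjugation with $H\otimes H$ and with a suitable $R_x(\pi/2)\otimes R_x(\pi/2)$, respectively. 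After absorbing the outer local unitaries $u_\mathrm{A}\otimes u_\mathrm{B}$ and $v_\mathrm{A}\otimes v_\mathrm{B}$ of the KC decomposition into the leading and trailing such factor, any $U$ with $\#\textsc{KC}(U)\leq n$ is displayed as a product of $n$ elements of $\mathcal{U}_c(\mathbb{C}^4)$ (padding with identities if the inequality is strict). The ``$\supseteq$'' inclusion of (\ref{eq:controlledKC3}) is then automatic, since every two-qubit unitary has KC number at most $3$.

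The main obstacle is the ``$\supseteq$'' inclusion of (\ref{eq:controlledKC2}): every product $VW$ with $V,W\in\mathcal{U}_c(\mathbb{C}^4)$ must satisfy $\#\textsc{KC}(VW)\leq 2$. My plan is reduction followed by direct computation. Write $V=L_1\exp[\rmi\alpha ZZ]L_2$ and $W=M_1\exp[\rmi\beta ZZ]M_2$ with local $L_i,M_i$, and set $n_\mathrm{A}\otimes n_\mathrm{B}:=L_2M_1$. Using the Euler-angle decomposition $n_q=e^{\rmi\phi_q}R_z(\xi_q)R_y(\theta_q)R_z(\xi'_q)$ for $q=\mathrm{A},\mathrm{B}$, the $R_z$ factors commute with $ZZ$ and get absorbed into the outer local unitaries, so it suffices to show that the reduced four-parameter family
\begin{equation}
\exp[\rmi\alpha ZZ]\,\bigl(R_y(\theta_\mathrm{A})\otimes R_y(\theta_\mathrm{B})\bigr)\,\exp[\rmi\beta ZZ]
\end{equation}
has KC number at most $2$. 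I would verify this by evaluating a local-unitary invariant sensitive to KC number $3$: passing to the magic basis, where local unitaries become real orthogonal matrices and $\exp[\rmi\alpha ZZ]$ is diagonal with two pairs of conjugate eigenvalues, the quantity $\mathrm{Im}\,\Tr[(U_B^T U_B)^2]$ evaluates on a KC form to $4\sin(4\alpha_\mathrm{x})\sin(4\alpha_\mathrm{y})\sin(4\alpha_\mathrm{z})$. A direct calculation on the reduced unitary above shows this invariant vanishes, which forces at least one KC coefficient to zero and completes the proof.
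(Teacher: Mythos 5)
The ``$\subseteq$'' inclusions and \Eref{eq:controlledKC1} are fine, and your route to them is arguably cleaner than the paper's (which exhibits an explicit two-controlled-gate circuit for $\exp[-\rmi(\alpha XX+\gamma ZZ)]$ and cites three-CNOT universality for \Eref{eq:controlledKC3}); your reduction of the hard ``$\supseteq$'' inclusion of \Eref{eq:controlledKC2} to the family $\exp[\rmi\alpha ZZ]\,(R_\mathrm{y}(\theta_\mathrm{A})\otimes R_\mathrm{y}(\theta_\mathrm{B}))\,\exp[\rmi\beta ZZ]$ is also exactly the paper's reduction to $W=C_\mathrm{p}(\theta_1)(R_\mathrm{y}(\phi_1)\otimes R_\mathrm{y}(\phi_2))C_\mathrm{p}(\theta_2)$. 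The gap is in the very last inference. Your invariant is computed correctly: on the canonical form, $\mathrm{Im}\,\Tr[(U_B^{T}U_B)^{2}]=4\sin(4\alpha_\mathrm{x})\sin(4\alpha_\mathrm{y})\sin(4\alpha_\mathrm{z})$. But its vanishing does not force any $\alpha_k$ to be zero, because the Weyl-chamber coefficients range over $[0,\pi/4]$ and $\sin(4\alpha_k)$ also vanishes at the boundary value $\alpha_k=\pi/4$. Concretely, $\exp[\rmi\frac{\pi}{4}(XX+YY+ZZ)]=e^{\rmi\pi/4}\,\mathrm{SWAP}$ has all three coefficients equal to $\pi/4$ and hence KC number $3$, yet your invariant equals $4\sin^{3}(\pi)=0$ on it. So the set where the invariant vanishes strictly contains $\{U\,|\,\#\textsc{KC}(U)\leq2\}$, and the step ``the invariant vanishes, which forces at least one KC coefficient to zero'' is not valid.

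To close the gap you need a quantity that also separates the $\alpha_k=\pi/4$ boundary: for instance the analogous quantity built from $\Tr[U_B^{T}U_B]$ itself rather than its square, whose imaginary part on the canonical form is $4\sin(2\alpha_\mathrm{x})\sin(2\alpha_\mathrm{y})\sin(2\alpha_\mathrm{z})$ and which, since $2\alpha_k\in[0,\pi/2]$, vanishes only if some $\alpha_k=0$ (you would then have to check it is well defined, up to sign, under local unitaries and $SU(4)$ global phases). The paper avoids invariants altogether: it feeds $W$ and $\varphi_\mathrm{B}=\ket{0}_\mathrm{B}\bra{0}$ into the map $\Gamma$ of \Eref{Gamma}, checks by direct computation that $\Gamma$ is unital, and concludes $\#\textsc{KC}(W)\leq2$ from the characterization established in the proof of Theorem~\ref{KC2}. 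Either repair works; as written, your proof of the ``$\supseteq$'' half of \Eref{eq:controlledKC2} is incomplete.
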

\begin{proof}
\Eref{eq:controlledKC1} follows from the fact that all unitaries $U$ such that $\#\textsc{KC}(U)\leq 1$ are a local unitary equivalent of the controlled-phase operations and the KC number of controlled-phase operations is 1.
The KC number of all unitary operations $U\in SU(4)$ is less than or equal to 3 and it is known that we can implement any unitary operations of $SU(4)$ by using 3 CNOT operations and local unitaries, which imply \Eref{eq:controlledKC3}.

To prove \Eref{eq:controlledKC2}, note that unitary operation $U_2=\exp{\left(-\rmi (\alpha XX + \gamma ZZ)\right)}$ is implemented by the circuit given in Figure \ref{fig:one}, which shows that $\{U\in SU(4)|\#\textsc{KC}(U)\leq2\}\subseteq\{UV|U,V\in\mathcal{U}_\mathrm{c}(\mathbb{C}^4)\}$. 
\begin{figure}[htbp]
 \begin{center}
  \includegraphics[width=60mm]{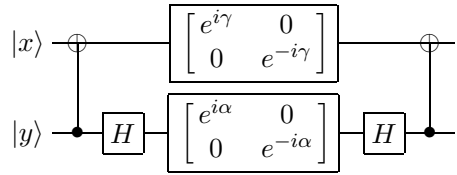}
 \end{center}
 \caption{The circuit that implements $U_2$.}
 \label{fig:one}
\end{figure}
Next, we show $\{U\in SU(4)|\#\textsc{KC}(U)\leq2\}\supseteq\{UV|U,V\in\mathcal{U}_\mathrm{c}(\mathbb{C}^4)\}$.
$U$ and $V$ are equivalent to some controlled-phase operations, which we denote as $C_\mathrm{p}(\theta_1)$ and $C_\mathrm{p}(\theta_2)$, up to local unitaries.
Hence the unitary operation $UV$ is local-unitarily equivalent to
\begin{equation}
W=C_\mathrm{p}(\theta_1)\left(R_\mathrm{y}(\phi_1)\otimes R_\mathrm{y}(\phi_2)\right)C_\mathrm{p}(\theta_2),
\end{equation}
where $R_\mathrm{y}(\phi)=\exp(-\rmi \frac{\phi}{2}Y)$.
The proof of the forward implication of Theorem \ref{KC2} also shows that \Eref{PICPTP} is satisfied if and only if CPTP map $\Gamma(\rho_\mathrm{A})$ for a given $U$ and $\varphi_\mathrm{B}$ is unital.
The unitary $W$ indeed forms a unital CPTP by setting $U \rightarrow W$ and $\varphi_\mathrm{B} \rightarrow \ket{0}_\mathrm{B}\bra{0}$ in \Eref{Gamma}.
Moreover, \Eref{PICPTP} and \Eref{PI2'} are equivalent.
Therefore, by Theorem \ref{KC2}, $\#\textsc{KC}(UV)=\#\textsc{KC}(W)\leq2$.
\end{proof}
Here, the interesting case is \Eref{eq:controlledKC2}, which shows that two two-qubit unitaries with the KC number 1 cannot be used to generate the highest level of entanglement in terms of LOCC partial invertibility.

\begin{lemma}
For any two unitaries $U,V\in SU(4)$,
\begin{equation}
|\#\textsc{KC}(U)-\#\textsc{KC}(V)|\leq \#\textsc{KC}(UV)\leq \#\textsc{KC}(U)+\#\textsc{KC}(V).
\end{equation}
\label{lemma:KC}
\end{lemma}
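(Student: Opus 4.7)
The plan is to derive both bounds directly from Lemma \ref{KCcombined}, which in effect says that a unitary has KC number at most $k$ (for $k=1,2,3$) if and only if it can be written as a product of $k$ elements of $\mathcal{U}_\mathrm{c}(\mathbb{C}^4)$ (noting that local unitaries, which have KC number 0, are themselves elements of $\mathcal{U}_\mathrm{c}(\mathbb{C}^4)$). The KC number therefore behaves like a length function on $SU(4)$ with generating set $\mathcal{U}_\mathrm{c}(\mathbb{C}^4)$, capped at $3$, and the desired inequalities are exactly the triangle-type inequalities for such a length.

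First I would prove the upper bound. Let $m=\#\textsc{KC}(U)$ and $n=\#\textsc{KC}(V)$. By Lemma \ref{KCcombined}, $U$ factors as a product of $m$ controlled-unitaries and $V$ factors as a product of $n$ controlled-unitaries (using that $m,n\leq 3$, and taking trivial identity factors if $m=0$ or $n=0$). Concatenating these factorizations expresses $UV$ as a product of $m+n$ controlled-unitaries. If $m+n\leq 3$, Lemma \ref{KCcombined} immediately gives $\#\textsc{KC}(UV)\leq m+n$; if $m+n>3$, the inequality $\#\textsc{KC}(UV)\leq 3\leq m+n$ is automatic, since all two-qubit unitaries have KC number at most $3$.

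Next I would establish the auxiliary fact that $\#\textsc{KC}(V^{-1})=\#\textsc{KC}(V)$ for every $V\in SU(4)$. From the KC decomposition \eref{KCdecomp}, inverting $V$ yields
\begin{equation}
V^{-1}=v_\mathrm{A}^\dag\otimes v_\mathrm{B}^\dag\cdot\exp[-\rmi(\alpha_\mathrm{x}XX+\alpha_\mathrm{y}YY+\alpha_\mathrm{z}ZZ)]\cdot u_\mathrm{A}^\dag\otimes u_\mathrm{B}^\dag,
\end{equation}
which is again in KC form with the signs of the coefficients flipped, so the number of nonzero coefficients is unchanged.

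Finally, the lower bound follows from the upper bound by rewriting $U=(UV)V^{-1}$. Applying the upper bound to this factorization and using the invariance of the KC number under inversion gives $\#\textsc{KC}(U)\leq\#\textsc{KC}(UV)+\#\textsc{KC}(V)$, hence $\#\textsc{KC}(U)-\#\textsc{KC}(V)\leq\#\textsc{KC}(UV)$, and by the symmetric argument with $V=U^{-1}(UV)$ also $\#\textsc{KC}(V)-\#\textsc{KC}(U)\leq\#\textsc{KC}(UV)$. The hardest step, conceptually, is the upper bound, since it is the one that genuinely uses Lemma \ref{KCcombined}; once that and the inversion-invariance are in hand, the lower bound is a purely formal consequence and no further case analysis on the Weyl chamber is needed.
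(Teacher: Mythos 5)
Your proposal is correct and follows essentially the same route as the paper: both arguments rest on Lemma \ref{KCcombined} for the upper bound and on the invariance $\#\textsc{KC}(U)=\#\textsc{KC}(U^\dag)$ to extract the lower bound from it (the paper does this by explicit case analysis over the pairs of KC numbers, whereas you derive it uniformly from $U=(UV)V^{-1}$, which is cleaner but not substantively different). The one loose point is the phrase ``trivial identity factors if $m=0$'': a unitary with KC number $0$ is a generally nontrivial local unitary, not the empty product, so in that case you should instead absorb it into an adjacent factor of $\mathcal{U}_\mathrm{c}(\mathbb{C}^4)$ (or invoke invariance of the KC number under composition with local unitaries, which also settles the $(0,0)$ case where counting local unitaries as elements of $\mathcal{U}_\mathrm{c}(\mathbb{C}^4)$ would only give the bound $2$); this is easily repaired and does not affect the validity of the argument.
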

\begin{proof}
We assume without loss of generality that $\#\textsc{KC}(U)\geq\#\textsc{KC}(V)$.
First we prove 
\begin{equation}
\#\textsc{KC}(UV)\leq \#\textsc{KC}(U)+\#\textsc{KC}(V).
\end{equation}
The only nontrivial proof is the case 
\begin{equation}
\#\textsc{KC}(U)=\#\textsc{KC}(V)=1.
\label{eq:KC1}
\end{equation}
By \Eref{eq:KC1}, $U\in\mathcal{U}_\mathrm{c}(\mathbb{C}^4)$ and $V\in\mathcal{U}_\mathrm{c}(\mathbb{C}^4)$.
Therefore, by Lemma \ref{KCcombined}, we can show $\#\textsc{KC}(UV)\leq 2$. 

Next we prove
\begin{equation}
\#\textsc{KC}(U)-\#\textsc{KC}(V)\leq \#\textsc{KC}(UV).
\end{equation}
The nontrivial cases are
\begin{eqnarray}
\#\textsc{KC}(U)=2,\#\textsc{KC}(V)=1\label{X1}\\
\#\textsc{KC}(U)=3,\#\textsc{KC}(V)=1\label{X2}\\
\#\textsc{KC}(U)=3,\#\textsc{KC}(V)=2.\label{X3}
\end{eqnarray}
Let us first consider cases of Equations (\ref{X1}) and (\ref{X3}).
If $\#\textsc{KC}(UV)=0$, we can let $UV=u_\mathrm{A}\otimes u_\mathrm{B}$, where $u_\mathrm{A} \in \mathcal{U}(\mathcal{H}_\mathrm{A})$ and $u_\mathrm{B} \in \mathcal{U}(\mathcal{H}_\mathrm{B})$, thus
\begin{equation}
\#\textsc{KC}(U)=\#\textsc{KC}(u_\mathrm{A}\otimes u_\mathrm{B} V^{\dag})=\#\textsc{KC}(V^{\dag})=\#\textsc{KC}(V).
\end{equation}
Note that for any $SU(4)$, $\#\textsc{KC}(U)=\#\textsc{KC}(U^{\dag})$.
For case (\ref{X2}), if $\#\textsc{KC}(UV)\leq1$, we set $UV=W$.
Therefore,
\begin{equation}
\#\textsc{KC}(U)=\#\textsc{KC}(W V^{\dag})\leq\#\textsc{KC}(W)+\#\textsc{KC}(V)\leq 2.
\end{equation}

\end{proof}
The proven inequality shows that two sequentially applied unitaries with a low KC number cannot change much the LOCC invertibility of the states.

\section{Conclusion}
We studied entanglement conversion of quadpartite states under a single use of a non-separable two-qubit unitary and two-party LOCC.
The quadpartite system consists of two qubits, which may be initially entangled to its own reference systems, but otherwise disentangled.
We analyzed the LOCC partial invertibility of the system after a unitary is applied on the qubits.
The analysis showed that if the entangling unitary has a higher KC number, more qubits need to be initially disentangled, which is the first instance to our knowledge in which a clear relation between entanglement conversion and the KC number has been drawn.
Our result shows that the difference between global operations and LOCC appears not only in entangling process, but also in \textit{dis}entangling process in the multipartite setting.
We also proved how the KC number changes when two or more two-qubit unitaries are applied sequentially.
Finally, LOCC partial inversion is an example of entanglement conversion in which the entangling unitary could not be characterized by its operator Schmidt number, since unitaries with KC number 2 and 3 cannot be distinguished by the operator Schmidt number.
It is an open question why such difference appears and whether it applies to other tasks as well.

\ack
This work is supported by the Project for Developing Innovation Systems of MEXT, Japan, the Global COE Program of MEXT Japan, and JSPS KAKENHI (Grant No. 23540463, and No. 23240001).

\Bibliography{99}
\bibitem{VP}
 Plenio M B and Virmani S 2007 \textit{Quantum Inf. Comput.} \textbf{7} 1

\bibitem{H}
 Horodecki R, Horodecki P, Horodecki M, and Horodecki K 2009 \RMP \textbf{81} 865-942

\bibitem{NDDG+}
 Nielsen M A, Dawson C M, Dodd J L, Gilchrist A, Mortimer D, Osborne T J, Bremner M J, Harrow A W, and Hines A 2003 \PR A \textbf{67} 052301

\bibitem{Makhlin}
 Makhlin Y 2002 Quantum Information Processing \textbf{1} 243-52

\bibitem{KC} 
 Kraus B and Cirac I J 2001 \PR A \textbf{63} 062309

\bibitem{Zhang} 
 Zhang J, Vala J, Shankar S and Whaley K B 2003 \PR A \textbf{67} 042313

\bibitem{KG}
 Khaneja N and Glaser S J 2001 \textit{Chem. Phys.} \textbf{267} 11-23

\bibitem{Ota}
 Ota M, Ashhab S, and Nori F 2012 \PR A \textbf{85} 043808

\bibitem{majorization}
 Nielsen M A 1999 \PRL \textbf{83} 436-9

\bibitem{Vidal}
 Vidal G 2000 \textit{J. Mod. Opt.} \textbf{47} 355-76

\bibitem{Tajima}
 Tajima H 2012 \textit{Ann. Phys.} \textbf{329} 1-27

\bibitem{deVincente}
 de Vincente J I, Spee C, and Kraus B 2013 \PRL \textbf{111} 110502

\bibitem{GW}
 Gregoratti M and Werner R F 2003 \textit{J. Mod. Opt.} \textbf{50} 915-33

\bibitem{qBirkhoff}
Landau L J and Streater R F 1993 \textit{Linear Algebra Appl.} \textbf{193} 107-27

\bibitem{SM}
 Soeda A and Murao M 2010 \NJP \textbf{12} 093013

\bibitem{ZanardiZF} 
 Zanardi P, Zlka C, and Faoro L 2000 \PR A \textbf{62} 030301(R)

\bibitem{LHL} 
 Leifer M S, Henderson L, and Linden N 2003 \PR A \textbf{67} 012306

\bibitem{Nielsen}
 Nielsen M A, Dawson C M, Dodd J L, Gilchrist A, Mortimer D, Osborne T J, Bremner M J, Harrow A W and Hines A 2003 \PR A \textbf{67} 052301

\bibitem{matchgate}
 Jozsa R, Kraus B, Miyake A and Watrous J 2010 \PRS A \textbf{466}  809-30

\endbib

\end{document}